\begin{document}
\renewcommand*{\figurename}{FIG.}
\newtheorem{thm}{Theorem}
\newtheorem{definition}{Definition}
\newtheorem{prop}{Proposition}
\newtheorem{cor}{Corollary}

\title{Tenement house model}

\author{Wojciech Ganczarek}
\email[]{w.ganczarek@uj.edu.pl}
\affiliation{Institute of Physics, Jagiellonian University, ul.\
Reymonta 4, 30-059 Krak\'ow, Poland\\
Institute of Mathematics, Jagiellonian University, ul.\
\-Lojasiewicza 6, 30-346 Krak\'ow, Poland
 }

\date{November 24, 2012}

\begin{abstract}
Most of the common used models of epidemic spreading allow contaminating many neighbors of a particular node in the network. They are usually analyzed by differential equations on probability vectors. We propose a model of epidemic spreading, which restricts to at most one contamination per time step and analyze it by discrete approach, working on vectors of possible states of the system. Theoretical predictions of epidemic treshold, stationary state and time needed to reach it are given and appear to be perfectly consistent with computer simulations. We also point ou that the model appears to be well suited to mimic epidemic spreading within student communes.
\end{abstract}

\pacs{87.23.Ge, 87.19.X-}

\maketitle

\section{Introduction}
It has been admitted, that the most appropiate models of epidemics spreading are these based on dynamical processes on particular graph models rather than those defined by fenomenological differential equations \cite{AM,NewmanSIAM}. Within this approach the nodes of a network are usually considered as individuals, who are connected with each other by vertices corresponding to social links.  Although some authors use continous time simulations (see e.g. \cite{Boguna}), the approach presented commonly (see \cite{DynProc} for a review) is based on the idea that at each discrete time step a particular node of the network can contaminate each of its neighbors with some finite probability $p$. The whole set of vertices is being divided into compartments, usually referred to susceptible (S), infected (I) and recovered (R) individuals, but the general mechanism stays more or less unchanged. There has been a broad range of methods developed in order to analyze such models. In the most basic approach people assume individuals to be identical and homogeneously mixed (homogeneous assumption, 
\cite{DynProc}). In order to take into account heterogenity of the system a kind of block approximation hax been used \cite{PSV}, treating nodes with the same degree as statistically equivalent. This is not always enough, as some real networks manifest degree correlation, mainly: the conditional probability, that two vertices of degree $k,\,k'$ are connected depends on both degrees $k,\,k'$ \cite{cor1}. The next step thus is to take into account correlation \cite{BPSV}. Finally, one can employ whole adjacency matrix describing the graph we analyze \cite{FullM,cor2,cor3}. The validity of all these approaches is still under investigation, see e.g. \cite{Localization}. Note however, that all these variations listed above work on equations describing relationships between probability vectors. In particular, for the last example, the system is being described by $p_i$ - probability, that i-th node is infected. The problem, however, is that there is not a single moment when a particular verte is - say - 0.41 
infected. A vertex can be either infected (1) or not (0). This problem has been already noticed by Petermann and De Los Rios \cite{Paulo}.

In this paper we introduce another model of epidemic spreading and analize it with complitely different approach. Let us focus on sexual transmitted diseases. For this case the assumption that a particular node is able to contaminate more then one of its neighbots during a time step seems not to be the most suitable one. Bearing this idea in mind we develope a single infection epidemic spreading model. 

This paper is organized as follows. In section II we describe proposed model of epidemics spreading with at most one infection per time step. In Section III the theoretical analysis of the model: epidemics treshold, stationary state and mixing time, is being investigated. Simulations are presented in section IV. In Section V we draw the final considerations.

\section{Model description}
\label{des}
Consider a connected, unweighted graph with $n$ vertices enumerated by indices $i=1,\,\dots,\,n$, described by transition matrix $\{P_{ij}\}$, $\sum_{j=1}^n P_{ij}=1$. The model will be of SIS kind: all the individuals are at the beginning considered as susceptible (S). After contamination they become infected (I) but they still have a chance to recover and be susceptible again.

We start thus with the all but one nodes susceptible. The one which is infected is chosen at random. The whole process consists of 4 actions which we repeat at each discrete time step.  At each time step we choose randomly, with identical probability $\frac{1}{n}$, a node - say - $i$-th one. Then we choose its neighbor according to the transition matrix $\{P_{ij}\}$, i.e. there is $P_{ij}$ chance that we point $j$-th vertex. If one of these two individuals $i,\,j$ is infected, it contaminates the second one with probability $z$. At the end of each time step we recover each infected node with probability $r$.

Note, that this method restricts not only each infected node to contaminate at most one of its neighbor. In fact we restrict all the dynamics to at most one contamination per time step. One can say it is not realistic approach. However, from the one hand we can say that this could be a basis for further generalisation. From the other: we can imagine, and even find in reality, systems that fulfill assumption described above. In large, academic cities there are often big flats situated in old tenement houses, settled by quite large amounts of students, who live with 3-4 roommates per chamber. As there is no space for privacy in this way of living, they sometimes devote one room in the flat to be a so-called {\sl sexroom}, so contamination by sexually transmitted diseases can take place at most once per time step (say: per hour). This seems to be a good example of a system which can be described by our model.

\section{Model analysis}
In order to mathematically describe the model we define $X_j(t)$ which takes the value $1$ if the node $j$ is being contaminated by one of its neighbors at the time step $t$, and $0$ otherwise. Furthermore, we denote the set of all infected nodes at the time step $t$ by $I(t)$. We will be interested in the expectation value of $X_j(t)$ with a condition that the set of infected nodes consists of some particular vertices. 

There are two independent ways of contaminating $j$-th node during one time step. Either we choose $j$-th node (with probability $\frac{1}{n}$) and then one of its infected neighbor (with probability $\sum_{k\in I(t)}P_{jk}$) or we choose $j$-th node's neighbor (with probability $\frac{1}{n}$ for each one) and then we pick $j$-th node (it happens with probability $P_{kj}$ for a particular node $k$, so $\sum_{k\in I(t)}P_{kj}$ for all of them). Summing up we obtain:
\begin{equation}
  \mathbb{E}(X_j|I(t))=\frac{z}{n}\Big(\sum\limits_{k\in I(t)}P_{jk}+\sum\limits_{k\in I(t)}P_{kj}\Big),
\label{eq:exp}
\end{equation}
where both terms are multiplied by contamination probability $z$. We are, however, interested in the behaviour of whole system, not one node only. 

Let us thus define $D(t)$ - expectation value of change of the number of infected nodes. Due to additivity of expectation value we can write:
\begin{equation}
D(t)=\mathbb{E}(|I(t+1)|-|I(t)|\,|\,I(t))=\sum\limits_{j\notin I(t)}\mathbb{E}(X_j|I(t))-r|I(t)|,
\end{equation}
where, apart from adding all $\mathbb{E}(X_j|I(t))$ terms, we substract the term responsible for healing: number of infected nodes multiplied by recovery probability $r$. Using Eq. (\ref{eq:exp}) we immediately conclude:
\begin{equation}
D(t)=\frac{z}{n}\Big(\sum\limits_{k\in I(t),\,j\notin I(t)}P_{jk}+\sum\limits_{k\in I(t),\,j\notin I(t)}P_{kj}\Big)-r|I(t)|.
\label{eq:d}  
\end{equation}
The equation above defines the dynamics of the model: by solving it one could provide the complete information about the process.
Unfortunately, in general sums $\sum_{k\in I(t),\,j\notin I(t)}P_{jk}$, $\sum_{k\in I(t),\,j\notin I(t)}P_{kj}$ are not 
precisely known as they strongly depend on the shape of the set $I(t)$. We will show, however, that we are able to derive
exact result for epidemic treshold for any graph and stationary state for some special cases.

\subsection{Epidemic treshold}
Our first aim is to find out the epidemic treshold for the process described above. We are interested in some relation of model parameters $n,\,z,\,r$ that defines a border between two situations: dropping and rising of the number of infected nodes in the beginning of the process.

We are going to analyze Eq. (\ref{eq:d}). We have there two sums that look similar, so the first idea would be to add them somehow. But in general $\sum_{k\in I(t),\,j\notin I(t)}P_{jk}\ne\sum_{k\in I(t),\,j\notin I(t)}P_{kj}$, so we cannot that easily simplify this equation (except for $\{P_{ij}\}$ - bistochastic). However, in order to find epidemic treshold, we are interested in the behaviour of the system in the vicinity of $t=0$. Let us thus take $|I(t)|=1$ then, as it happens at the very beginning of the evolution, and denote the only infected neighbor by index $l$. Then $\sum_{k\in I(t),\,j\notin I(t)}P_{kj}=\sum_{j\notin I(t)}P_{lj}=1$ (as $\{P_{ij}\}$ - stochastic) and $\sum_{k\in I(t),\,j\notin I(t)}P_{jk}=\sum_{j\notin I(t)}P_{jl}$. Now we use the fact that at the beginning the first infected node is being chosen uniformly at random. Therefore the last term should be averaged over all possibilities of choosing $l$: $\frac{1}{n}\sum_{l\in V}\sum_{j\notin I(t)}P_{jl}=\frac{1}{n}\sum_{j\notin I(t)}1=\
frac{n-1}{n}$, where we used once again the fact, that $\{P_{ij}\}$ is stochastic. Finally we write the condition $D(t)\geq0$ which indicates the epidemic outbreak:
\begin{equation}
D(t)\leq\frac{z}{n}\times (1+\frac{n-1}{n})-r\leq0.
\end{equation}
For large $n$ the term $\frac{n-1}{n^2}$ can safely be substituted by $\frac{1}{n}$. The condition for epidemic treshold for the model we proposed is thus:
\begin{equation}
\frac{z}{r}=\frac{n}{2}.
\label{eq:trsh}
\end{equation}

\subsection{Stationary state}
Let us now turn to stationary state problem. The model being analized is by definition a purely Markovian one and above the epidemic treshold we anticipate our system to stay at some non-zero stationary state, i.e. we expect that the number of infected nodes will, in long times, oscillate about a fixed value. Practically however, due to statictical flucutation in finite real or simulational system, the epidemy may die out even above the treshold. 
                                                                                                                                                                                                                                                 
The stationary fraction of infected nodes in general case (not specifying any particular shape of the graph) is not as easy reachable as the treshold calculated in the last section. What we basically have to do is to use once again all the formalism presented above and find the solution for the equation $D(t)=0$ without the constraint $|I(t)|=1$. The problem is to compute the sum $\sum_{k\in I(t),\,j\notin I(t)}P_{jk}$ - a task which is not trivial. We will thus estimate only stationary state for general case. In later subsections we give exact solutions for special cases of complete graph and uncorrelated homogenous graph.

In order to perform estimation of the stationary state, we introduce the notion of graph conductance \cite{Con}: 
\begin{definition} Conductance of a given graph $G$ described by a stochastic matrix $\{P_{ij}\}$ is:
\begin{equation}
\Phi (P)=\min_{S\subset V}\frac{\sum\limits_{j\in S,\,k\notin S}P_{jk}}{min\{|S|,|V-S|\}},  
\label{eq:con}
\end{equation}
where $V$ is the set of vertices of a graph G.
\end{definition}
This quantity measures how well-connected a given graph is. Due to the definition above we will analyze separately cases with the stationary fraction of infected nodes $i_s=\frac{|I_s|}{n}$ smaller and greater than $\frac{1}{2}$.

Consider first $i_s\geq\frac{1}{2}$. Then also $|I_s|\geq n-|I_n|$ and, using Eq. (\ref{eq:d}), we lowerbound D(t):
\begin{equation}
D(t)\geq \frac{2z}{n} \Phi(P)(n-|I|)-r|I|.
\label{eq:gend}
\end{equation}
Bounding the latter expression in Eq. (\ref{eq:gend}) from zero we find that $D(t)$ is positive for $\frac{1}{2}\leq i\leq\frac{1}{1+\frac{rn}{2z\Phi(P)}}$, therefore the stationary fraction $i_s$ must be higher than this:
\begin{equation}
i_s\geq\frac{1}{1+\frac{rn}{2z\Phi(P)}}.
\end{equation}
Let us now focus on the opposite case, mainly $i_s\leq\frac{1}{2}$, $|I_s|\leq n-|I_n|$.
We again lowerbound D(t) using Eq. (\ref{eq:d}):
\begin{eqnarray}
D(t)\geq \frac{2z}{n} \Phi(P)|I|-r|I|\geq\frac{2z}{n} \Phi(P)|I|-r(n-|I|).
\label{eq:dfi}
\end{eqnarray}
Bounding right hand side of Eq. (\ref{eq:dfi}) from zero, we conclude analogically to the situation above:
\begin{equation}
i_s\leq\frac{1}{1+\frac{2z\Phi(P)}{rn}}.
\end{equation}
This result, however mathematically correct, appears to be quite useless: the value of $\Phi(P)$ is usually much lower than the sums that it approximates ($\sum_{k\in I(t),\,j\notin I(t)}P_{jk}$, $\sum_{k\in I(t),\,j\notin I(t)}P_{kj}$) during the process. Let us thus work out exact results for some special cases.

\subsection{Special cases}
\subsubsection{Complete graph}
For complete graphs, i.e. graphs with all possible links present, we easily find the exact solution of stationary state problem. Note, that for this special case:
\begin{equation}
\sum\limits_{k\in I(t),\,j\notin I(t)}P_{jk}=\sum\limits_{k\in I(t),\,j\notin I(t)}P_{jk}=\frac{|I(t)|(n-|I(t)|)}{n-1},
\end{equation}
as each of $|I|$ infected nodes is linked to each of $(n-|I|)$ susceptible nodes by an edge chosen with probability $\frac{1}{n-1}$ as each node has $(n-1)$ neighbors. We can thus find explicit and exact condition for $D(t)=0$. From Eq. (\ref{eq:d}) we get:
\begin{equation}
i(t)_{s}=1-\frac{r(n-1)}{2z}.  
\label{eq:fg-stat}
\end{equation}

\subsubsection{Uncorrelated homogenous graph}
\label{largen}
Let us consider now hypothetical uncorrelated homogenous graph. The term ''uncorrelated'' stands for the feature that the probability that an edge departing from a vertex of degree $j$ points on a vertex of degree $k$ is independent from the degree of vertex $j$. By ''homogenous'' we mean that average number of connections between sets of vertices of some fixed sizes depends only on these sizes, not on the actual constituents of those sets.
\newline\indent
Bearing these assumptions in mind let us compute expectation values of the two sums from Eq. (\ref{eq:d}):
\begin{eqnarray} \nonumber
 & \mathbb{E}(\sum_{j\in I(t),\,l\notin I(t)}P_{lj})=\frac{\mathbb{E}(k)}{n-1}\sum_{j\in I(t),\,l\notin I(t)}\mathbb{E}(\frac{1}{k}|k\geq 1) & \\ 
& = \frac{\mathbb{E}(k)}{n-1}\mathbb{E}(\frac{1}{k}|k\geq 1)|I(t)|(n-|I(t)|),  &
\end{eqnarray}
where we put $\mathbb{E}(k)/(n-1)$ for the expectation value of existence of link between two vertices. We substract 1 from $n$ as a node cannot be connected with itself. The stationary infected nodes density comes to be:
\begin{equation}
 i_s=1-\frac{r(n-1)}{2z\langle \frac{1}{k} \rangle\langle k \rangle},
\label{eq:gnp-stat}
\end{equation}
where we denote $\langle k \rangle=\mathbb{E}(k)$ and $\langle 1/k \rangle=\mathbb{E}(1/k)$. Specifically, for $G(n,p)$ random graph (with the well-known binomial degree distribution) the product of $\langle \frac{1}{k} \rangle\langle k \rangle$ goes to 1. In this case the latter result (\ref{eq:gnp-stat}) recovers the solution for complete graphs (\ref{eq:fg-stat}). Moreover, $G(n,p)$ graphs are indeed uncorrelated in the limit of large $n$ \cite{gnp-nocor}, so we expect $G(n,p)$ behaving like complete graphs for large $n$.

\subsection{Mixing time}
\label{ms}
In this chapter we will be interested in mixing time described in this article, i.e. the time needed by the process to reach the stationary state. Strictly speaking, this is kind of meta-stable stationary state, as in simulations on finite networks the only absorbing, stable state is the situation when the number of infected nodes is zero. It is clearly visible on Fig. \ref{fig:singlerun}, that we can distinguish two regimes with different behaviour: the regime of rapid increase in the number of infected nodes and the regime of stabilization. Let us 
state and prove a general theorem restricting mixing time for any graph. The proof is inspired by related considerations for gossip spreading done by Shah \cite{shah}.

\begin{thm}
Let $P$ be a stochastic transition matrix of a graph $G$ of the size $n$. Then the mixing time $T$ for the process described above fulfills:
$$ T(\epsilon)=O(\log n + \log \epsilon^{-1}). $$ 
\label{thm:1}
\end{thm}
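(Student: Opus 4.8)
The plan is to imitate the two-phase analysis used for rumour and gossip spreading: a \emph{growth} phase in which the infected set multiplies by a fixed factor per step until it covers a constant fraction of the graph, and a \emph{saturation} phase in which the residual gap to the (meta-)stationary level is closed. Both phases rest on Eq.~(\ref{eq:d}), read as a recursion for $m(t):=|I(t)|$; the obstruction is that the cut sums $\sum_{k\in I(t),\,j\notin I(t)}P_{jk}$ and $\sum_{k\in I(t),\,j\notin I(t)}P_{kj}$ depend on the shape of $I(t)$, not merely on $m(t)$. I would absorb this dependence uniformly through the conductance inequality $\sum_{k\in I,\,j\notin I}P_{jk}\ge\Phi(P)\min\{|I|,n-|I|\}$ of Definition~1 (and its analogue for $P^{T}$), which collapses the intractable recursion for $D(t)$ into a one-dimensional stochastic drift inequality for $m(t)$.

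\emph{Growth phase.} While $m(t)\le n/2$, the conductance bound turns Eq.~(\ref{eq:d}) into
\[
D(t)\ \ge\ \Big(\tfrac{2z}{n}\Phi(P)-r\Big)m(t)\ =:\ c\,m(t),
\]
with $c>0$ in the supercritical regime, so that $\mathbb{E}\big(m(t+1)\mid I(t)\big)\ge(1+c)m(t)$, the sequence $(1+c)^{-t}m(t)$ is a submartingale, and $\mathbb{E}(m(t))\ge(1+c)^{t}$; hence the level $n/2$ is reached in mean time $O(c^{-1}\log n)=O(\log n)$. To upgrade this to a high-probability statement I would, following \cite{shah}, analyse the one-step increment --- a Bernoulli new infection, of probability $\tfrac{z}{n}\sum_{k\in I(t),\,j\notin I(t)}(P_{jk}+P_{kj})$ by Eq.~(\ref{eq:exp}), minus a $\mathrm{Binomial}\big(m(t),r\big)$ number of recoveries --- and show by Chernoff and second-moment estimates that $m(t)$ tracks its mean up to a constant factor except on an event whose probability can be pushed below $\epsilon$ by spending $O(\log\epsilon^{-1})$ extra steps. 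This is the delicate point, and it is where the adaptation genuinely departs from Shah's setting: rumour spreading is monotone, whereas here $m(t)$ can decrease and even hit the absorbing state $0$ while small, so the estimate must in addition bound the early-extinction probability --- which I would handle by arguing that extinction is only plausible before $m(t)$ first reaches level $\Theta(\log n)$, beyond which it costs a polynomially small probability.

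\emph{Saturation phase.} Once $m(t)$ exceeds $n/2$ I would track the complementary quantity $h(t):=n-m(t)$ (or $m(t)-|I_s|$, according to whether $i_s>\tfrac12$). Feeding the conductance bound for the complementary cut into Eq.~(\ref{eq:d}) yields a contractive recursion $\mathbb{E}\big(h(t+1)\mid I(t)\big)\le(1-\delta)h(t)+\mathrm{const}\cdot n$ for some $\delta>0$, whose fixed point is exactly the stationary estimate $i_s\ge\big(1+\tfrac{rn}{2z\Phi(P)}\big)^{-1}$ obtained earlier; therefore $|h(t)-h_\infty|$ contracts by a constant factor per step, and $O(\delta^{-1}\log\epsilon^{-1})$ further steps bring the infected fraction within $\epsilon$ of $i_s$. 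This phase is cleanest for the special graphs of the previous subsection --- complete and uncorrelated homogeneous --- where the cut sums are genuine functions of $m(t)$ alone and two-sided control is automatic. Adding the two phases gives $T(\epsilon)=O(\log n)+O(\log\epsilon^{-1})=O(\log n+\log\epsilon^{-1})$.

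The chief obstacle, beyond the concentration-and-extinction bookkeeping just sketched, is that the conductance inequality is one-sided and lossy: as already noted for the stationary-state estimate, $\Phi(P)$ may grossly underestimate the true cut sums, so the constants $c$ and $\delta$ it supplies are positive only for sufficiently well-connected graphs (or sufficiently large $z/r$), and it gives no upper bound on $D(t)$ and hence no direct control of overshoot. Reading ``for any graph'' as ``with an implied constant depending on $\Phi(P)$, $z$ and $r$'' is what keeps the $O(\cdot)$ statement honest; a bound with constants uniform over all supercritical graphs would require replacing $\Phi(P)$ by the actual, shape-sensitive cut sums --- exactly the computation declared intractable above.
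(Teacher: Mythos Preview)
Your two-phase decomposition and the conductance-based drift inequality $D(t)\ge c\,m(t)$ with $c=\tfrac{2z}{n}\Phi(P)-r$ in the growth phase match the paper exactly; the saturation phase is likewise handled there by the same machinery with a modified drift constant. The point of departure is how the drift is converted into a tail bound on the hitting time $\Lambda=\inf\{t:|I(t)|>n/2\}$.

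You track the submartingale $(1+c)^{-t}m(t)$, which controls $\mathbb{E}(m(t))$ from below but does not by itself bound $\mathbb{P}(\Lambda>t)$; you then defer to Chernoff and second-moment estimates that are not carried out, and you yourself flag this as ``the delicate point''. The paper sidesteps this entirely by passing to the \emph{reciprocal}. Convexity of $x\mapsto 1/x$ gives
\[
\frac{1}{|I(t+1)|}\ \le\ \frac{1}{|I(t)|}-\frac{1}{|I(t+1)|^{2}}\big(|I(t+1)|-|I(t)|\big)\ \le\ \frac{1}{|I(t)|}\,e^{-a},\qquad a=\tfrac{1}{d^{2}}\Big(\tfrac{2z}{n}\Phi(P)-r\Big),
\]
so that $\zeta(t):=e^{at}/|I(t)|$ is a \emph{supermartingale} with $\zeta(0)=1$. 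Since $|I(t)|\le n/2$ up to time $\Lambda$, one has $\zeta(\Lambda\wedge t)\ge\tfrac{2}{n}e^{a(\Lambda\wedge t)}$, hence $\mathbb{E}\big(e^{a\Lambda}\big)\le n/2$, and a single application of Markov's inequality yields $\mathbb{P}(\Lambda>t_{1})\le\epsilon/2$ for $t_{1}=a^{-1}(\log n+\log\epsilon^{-1})$. This exponential-supermartingale-on-the-reciprocal is precisely the device from Shah that you cite; it replaces your proposed concentration bookkeeping with one line. The second phase reruns the identical argument with $a$ replaced by a constant $b$ involving $|I_{s}|$, rather than switching to a contractive recursion for $n-m(t)$ as you suggest.

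Your remark about early extinction is well taken and is, in fact, not addressed by the paper's argument either: the supermartingale $\zeta(t)$ is undefined when $|I(t)|=0$, so the bound is implicitly conditional on survival.
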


\begin{proof}
We devide the proof into two parts, considering separately two stages of the process evolution:
$|I(t)|\leq\frac{n}{2}$ and $|I(t)|\geq\frac{n}{2}$. 
\begin{itemize}
  \item $|I(t)|\leq\frac{n}{2}$
\end{itemize}
We recall first the general result for $i_s\leq\frac{1}{2}$ stated in Eq. (\ref{eq:dfi}):  
$$D(t)\geq \frac{2z}{n} \Phi(P)|I(t)|-r|I(t)|.$$
Denote now by $\Lambda$ the smallest time $t$ such that the number of infected nodes exceeds $\frac{n}{2}$:
$$\Lambda=\inf\{t:|I(t)|>\frac{n}{2}\},$$
$$\Lambda \land t = \min(\Lambda,t). $$
Note, that as long as $|I(t)|\leq\frac{n}{2}$, we have $\Lambda \land (t+1)=\Lambda \land t+1$.
Recall now the general feature for any convex function $g$, $x_1,\,x_2\in\mathbb{R}$:
\begin{equation}
 g(x_1)\leq g(x_2)+g'(x_1)(x_z-x_2).
\end{equation}
Let us take: $g(x)=\frac{1}{x}$, $x_1=|I(t+1)|$ and $x_2=|I(t)|$, then:
\begin{equation}
\frac{1}{|I(t+1)|}\leq\frac{1}{|I(t)|}-\frac{1}{|I(t+1)|^2}\Big(|I(t+1)|-|I(t)|\Big).
\label{eq:appl-conv}
\end{equation}
By construction of the process we have:
$$|I(t+1)|\leq|I(t)|+1=d |I(t)|,$$
where $1\leq d\leq 2$, but as $|I(t)|=O(n)$ for $n$ big enough the constant $d$ can be arbitrarily close to $1$.
Now we continue with Eq. (\ref{eq:appl-conv}):
\begin{eqnarray}\nonumber
&\frac{1}{|I(t+1)|}\leq \frac{1}{|I(t)|} - \frac{1}{d^2|I(t)|^2}\Big(|I(t+1)|-|I(t)|\Big)\leq &\\ \label{eq:est}
&\frac{1}{|I(t)|}-\frac{1}{d^2|I(t)|^2}\Big( \frac{2z}{n} \Phi(P)|I(t)|-r|I(t)| \Big)\leq&  \\ \nonumber
&\frac{1}{|I(t)|}\Big(1- (\frac{2z}{n} \Phi(P)-r)d^{-2} \Big) \leq  \frac{1}{|I(t)|} \exp(-\frac{1}{d^2}(\frac{2z}{n} \Phi(P)-r)),& 
\end{eqnarray}
where in the second line we used Eq. (\ref{eq:dfi}) and the definition of $D(t)$, Eq. (\ref{eq:d}). In the last line we used the fact that $1-x\leq \exp(-z)$. Let us now define:
\begin{eqnarray}\label{eq:where}
&\zeta(t)=\frac{\exp(at)}{|I(t)|},& \\  \nonumber
&where\,\,a=\frac{1}{d^2}(\frac{2z}{n} \Phi(P)-r)&
\end{eqnarray}
We show that $\zeta(t)$ is a supermartingale, i.e. $ \mathbb{E}(\zeta(t)|\{\zeta(s) : s \le t' \}] \le \zeta(t') \quad \forall t' \le t$. As the only component of $\zeta(t)$ which is a random variable is $I(t)$ and as the process we analyze is Markovian and as $\Lambda \land (t+1)=\Lambda \land t+1$, it is enough to show that $\mathbb{E}(\zeta(\Lambda \land (t+1)) | I(\Lambda \land t) )\leq \zeta(\Lambda \land t)$. We do it using Eq. (\ref{eq:est}):
\begin{eqnarray}\nonumber
&\mathbb{E}(\zeta(\Lambda \land (t+1)) | I(\Lambda \land t) )=& \\ 
&\exp((\Lambda \land t) a) \exp(a)\mathbb{E}(\frac{1}{|I(\Lambda \land t+1)|}|I(\Lambda \land t))\leq&\\ \nonumber
& \exp((\Lambda \land t) a) \exp(a)\frac{1}{|I(\Lambda \land t)|} \exp(-a)= \zeta(\Lambda \land t).&
\end{eqnarray}\label{eq:gnp-pop}
As $\zeta(t)$ is a supermartingale we conclude that $\mathbb{E}(\zeta(\Lambda \land t))\leq\mathbb{E}(\zeta(\Lambda \land 0))=1$.
Furthermore, as we restrict ourselves to $|I(t)|\leq\frac{n}{2}$:
\begin{equation}
\zeta(\Lambda\land t)\geq \frac{2}{n} \exp((\Lambda\land t)a),  
\label{eq:z}
\end{equation}
and directry from it we conclude that:
\begin{equation}
\mathbb{E}(\exp((\Lambda\land t)a) )\leq \frac{n}{2}\mathbb{E}(\zeta(\Lambda\land t))\leq\frac{n}{2},  
\end{equation}
where in the last step we used the supermartingale property. Moreover, as $\exp((\Lambda\land t)a)\uparrow
\exp(\Lambda a)$ as $t\to \infty$, we have also:
\begin{equation}
\mathbb{E}( \exp(\Lambda a) )\leq \frac{n}{2}.  
\end{equation}
Finally, let us recall the Markov inequality:
\begin{equation}
  \mathbb{P}(|X|\geq c )\leq \frac{\mathbb{E}(|X|)}{c} 
\end{equation}
and choose $t_1=\frac{1}{a}(\ln(n)-\ln(\epsilon))$. Then we straightforwardly get:
\begin{equation}
\mathbb{P}(\Lambda>t_1)=\mathbb{P}(\exp(\lambda a)>\frac{n}{\epsilon})\leq \frac{ \mathbb{E}({\exp(\lambda a)})}{\frac{n}{\epsilon}}\leq\frac{\epsilon}{2}.  
\end{equation}

\begin{itemize}
  \item $|I(t)|\geq\frac{n}{2}$
\end{itemize}
For this case we perform exactly the same procedure, but starting from Eq. (\ref{eq:gend}) instead of Eq. (\ref{eq:dfi}), which we started with in the previous case. Following the same steps as above we only change constant $a$ in Eq. (\ref{eq:where}) into 
$b=\frac{1}{d^2}(\frac{2z}{|I_s|} \Phi(P)-\frac{2z}{n} \Phi(P)-r)$, where explicitely appears the number of infected nodes at the stationary state. Second thing that has to be changed is Eq. (\ref{eq:z}) where, instead of $\frac{n}{2}$ we can put $n$. Resulting time for this stage is:
\begin{eqnarray}
&\mathbb{P}(\Lambda>t_2)=\leq \frac{ \mathbb{E}({\exp(\lambda b)})}{\frac{n}{\epsilon}}\leq\epsilon,  & \\ \nonumber
& where\,\,t_2=\frac{1}{b}(\ln(n)-\ln(\epsilon)).&
\end{eqnarray}

\end{proof}

From this general theorem we conclude, that the closer we are with chosen parameters to the zero--stationary state (i.e. the smaller is the stationary density of infected nodes), the slower is the first phase of rapid increase:

\begin{cor}
Mixing time is linear with inverse of the distance $\eta$ from the epidemics treshold, i.e.:
$$ T(\eta,\epsilon)=O\Big(\frac{1}{\eta}(\log n + \log \epsilon^{-1})\Big). $$
\label{cor:1}
\end{cor}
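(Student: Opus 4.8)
The plan is to derive Corollary~\ref{cor:1} from Theorem~\ref{thm:1} by tracking the dependence of the constants $a$ and $b$ on the distance $\eta$ from the epidemic threshold. Recall from Eq.~(\ref{eq:trsh}) that the threshold is $\frac{z}{r}=\frac{n}{2}$, equivalently $\frac{2z}{n}-r=0$ up to the $O(1/n)$ correction already discarded in the threshold derivation. The natural way to parametrize the distance is to set $\eta=\frac{2z}{n}-r$ (or, after folding in the conductance, $\eta$ proportional to $\frac{2z}{n}\Phi(P)-r$, since $\Phi(P)$ is a fixed graph quantity independent of the parameters $n,z,r$ in the relevant sense). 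Then the exponent $a=\frac{1}{d^2}\bigl(\frac{2z}{n}\Phi(P)-r\bigr)$ appearing in Eq.~(\ref{eq:where}) is, up to the constant factor $d^{-2}\le 1$ and the fixed factor involving $\Phi(P)$, exactly of order $\eta$.

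The key step is then simply to substitute $a=\Theta(\eta)$ into the bound $t_1=\frac{1}{a}(\ln n-\ln\epsilon)$ obtained at the end of the first stage of the proof of Theorem~\ref{thm:1}. This immediately yields $t_1=O\bigl(\frac{1}{\eta}(\log n+\log\epsilon^{-1})\bigr)$. For the second stage one argues identically: the constant $b=\frac{1}{d^2}\bigl(\frac{2z}{|I_s|}\Phi(P)-\frac{2z}{n}\Phi(P)-r\bigr)$ governs the relaxation toward the meta-stable density $i_s$, and as the parameters approach the threshold both $|I_s|\to 0^+$ in the appropriate sense and the bracketed quantity scales linearly with the same distance $\eta$ (this is consistent with the remark immediately preceding the corollary, that a smaller stationary density slows the first phase). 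Hence $t_2=O\bigl(\frac{1}{\eta}(\log n+\log\epsilon^{-1})\bigr)$ as well, and adding the two stages gives the claimed bound on $T(\eta,\epsilon)$.

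I would present the argument as: (i) fix the definition $\eta$ := distance from threshold in the form that makes $a$ and $b$ manifestly linear in $\eta$; (ii) quote $a=\Theta(\eta)$ and $b=\Theta(\eta)$; (iii) plug into the two probability bounds $\mathbb{P}(\Lambda>t_1)\le\epsilon/2$ and $\mathbb{P}(\Lambda>t_2)\le\epsilon$ from the theorem, now with $t_1,t_2=O\bigl(\eta^{-1}(\log n+\log\epsilon^{-1})\bigr)$; (iv) combine by a union bound over the two phases to get the total mixing time. The whole thing is one or two short paragraphs once the book-keeping of constants is set up.

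The main obstacle is making precise in what sense $b$ is linear in $\eta$: unlike $a$, the constant $b$ involves $|I_s|$, which itself depends on the parameters through the stationary-state formulas (e.g. Eq.~(\ref{eq:fg-stat}), Eq.~(\ref{eq:gnp-stat})), so the dependence on $\eta$ is only linear to leading order near the threshold and one must be careful that $|I_s|$ stays bounded away from $0$ on the relevant scale, or else absorb the $1/|I_s|$ factor into the asymptotics correctly. A secondary, more cosmetic point is that $\Phi(P)$ and the precise proportionality constant between $\eta$ and $\frac{2z}{n}\Phi(P)-r$ are graph-dependent; since the corollary is stated with an $O(\cdot)$ that hides graph-dependent constants, this is harmless, but it should be stated explicitly so the reader does not expect uniformity over all graphs.
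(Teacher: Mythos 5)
Your overall strategy---identify the exponential rate in the supermartingale argument of Theorem~\ref{thm:1} and show it scales linearly with the distance from threshold---is the right instinct, but the specific identification $a=\Theta(\eta)$ is where the argument breaks. The constant $a=\frac{1}{d^2}\bigl(\frac{2z}{n}\Phi(P)-r\bigr)$ vanishes on the surface $\frac{2z}{n}\Phi(P)=r$, whereas the epidemic threshold of Eq.~(\ref{eq:trsh}) is (up to the $O(1/n)$ correction) $\frac{2z}{n}=r$. For any graph with $\Phi(P)<1$ these are different loci in parameter space: writing $\eta=\frac{2z}{n}-r$, one has $\frac{2z}{n}\Phi(P)-r=\Phi(P)\eta-r\bigl(1-\Phi(P)\bigr)$, which is negative for all sufficiently small $\eta>0$. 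So just above the true threshold $a$ is negative, the bound $t_1=a^{-1}(\ln n-\ln\epsilon)$ is meaningless, and $a$ is not $\Theta(\eta)$ in any neighbourhood of the threshold. Your escape hatch---redefining $\eta$ as the distance from $\frac{2z}{n}\Phi(P)=r$---proves a statement about a conductance-dependent pseudo-threshold, not about the epidemics threshold the corollary refers to. Your remarks about the second-stage constant $b$ and about $|I_s|$ inherit the same problem.

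The paper avoids this by not reusing the conductance-based rate at all. It parametrizes $\frac{z}{nr}=p_c(1+\eta)$, where $p_c$ denotes the critical value of $\frac{z}{nr}$ at the threshold, and re-derives the drift bound directly from Eq.~(\ref{eq:d}) in this parametrization, obtaining $D(t)\geq r\eta|I(t)|$. Substituting this into Eq.~(\ref{eq:est}) replaces the rate $a$ by $\frac{r\eta}{d^2}$, which vanishes exactly at the threshold and is manifestly linear in $\eta$; the rest of the supermartingale machinery then yields $t_c=\frac{d^2}{r\eta}(\ln n-\ln\epsilon)$. The missing idea in your proposal is precisely this step: one must return to $D(t)$ and produce a fresh lower bound whose prefactor is tied to the threshold itself, rather than hoping that the conductance bound inherited from the theorem degenerates at the right place.
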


\begin{proof}
Recall Eq. (\ref{eq:d}): we demand $D(t)\geq0$ and transform this condition to:
\begin{equation}
  \frac{z}{nr}\geq\frac{|I(t)|}{\Big(\sum\limits_{k\in I(t),\,j\notin I(t)}P_{jk}+\sum\limits_{k\in I(t),\,j\notin I(t)}P_{kj}\Big)},
\end{equation}
which boils down to equality for stationary state. We denote right hand side of this equation by $p_c$ for the smallest possible situation, i.e. for epidemics treshold. Now let us take values of parameters $z$, $n$ and $r$ such that:
\begin{equation}
  \frac{z}{nr}=p_c(1+\eta),
\label{eq:para}
\end{equation}
where $\eta\geq0$. Now we recall some parts of the proof of Theorem \ref{thm:1}. Actually, all we have to do is to rewrite condition for $D(t)$ in parametrization given in Eq. (\ref{eq:para}) and notion of $p_c$:
\begin{equation}
D(t)\geq r\Big(\frac{z}{nr} \frac{1}{p_c} |I(t)| -|I(t)|\Big)=r\eta |I(t)|.
\end{equation}
We put this result into Eq. (\ref{eq:est}) obtaining:
\begin{equation}
\frac{1}{|I(t+1)|}\leq \frac{1}{|I(t)|} \exp(-\frac{r\eta}{d^2}),
\end{equation}
end then we proceed in the same way as in the proof of Theorem \ref{thm:1}. The result is
\begin{eqnarray}
&\mathbb{P}(\Lambda>t_c)=\leq \frac{ \mathbb{E}({\exp(\lambda \frac{r\eta}{d^2} )})}{\frac{n}{\epsilon}}\leq\epsilon,  & \\ \nonumber
& where\,\,t_c=\frac{d^2}{r\eta}(\ln(n)-\ln(\epsilon)).&
\end{eqnarray}
   
\end{proof}

\section{Simulation}
Here we present simulations for stationary state of various types of networks, i.e. complete graph, $G(n,p)$ random graph \cite{Gilbert}, Watts-Strogatz small world graph \cite{SW} and graphs with power law degree distribution (scale-free network, see e.g. \cite{NewmanSIAM}). Computer-simulational investigations focus on the topics described theoretically in the last section, i.e. epidemic treshold, stationary state and mixing time.

\subsection{Epidemic treshold}
We check here the behaviour of the process in the very beginning, i.e. exactly at the first time step. Four kinds of networks are being examined: complete graph, $G(n,p)$ random graph with $p=0.5$, small world graph with $k=6$ neighbors on the circle and rewiring probability $p=0.5$ (see \cite{SW}) and scale-free network with the exponent $\alpha=2.5$. We vary sizes of networks $n$ and for each type of the graph we choose different recovery probability $r$. Looking for the critical value of contamination probability $z_c$ we change parameter $z$ and check for which value the fraction of infected nodes starts to increase. This procedure is being repeated 100000 times. Results are presented in Fig. \ref{fig:1step}. Visibly, simulations follow the theoretical prediction Eq. (\ref{eq:trsh}) prefectly for all four kinds of graphs being examined.

\begin{center}
\begin{figure}
  \includegraphics[height=5cm]{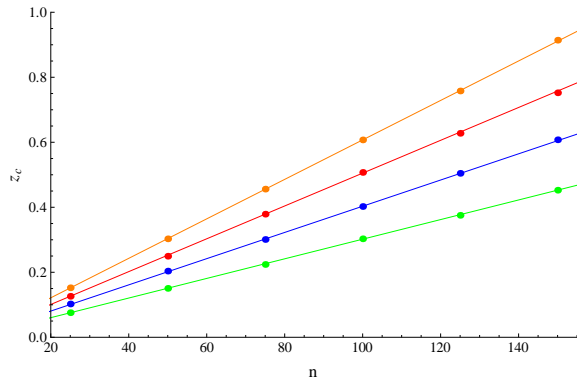}
  \caption{Epidemic treshold for four different type of graphs: dots stay for simulational results, lines present theoretical prediction, Eq. (\ref{eq:trsh}). Starting from the bottom we have results for $G(n,p)$ random graph (green line, $r=0.006$), scale-free network (blue line, $r=0.008$), complete graph (red line, $r=0.01$) and small world graph (orange line, $r=0.012$)}
\label{fig:1step}
\end{figure}
\end{center}

\subsection{Stationary state}
 Results for stationary state are obtained by performing many runs (typically 1000), finishing each of them at a fixed, long time step (10 000 - 100 000), cutting the beginning phase of rapid increase and fitting a line to the points oscillating about the stationary state. There are two types of results which we can end up with after a single run: epidemics either dies at a certain point (i.e. number of infected nodes, due to fluctuations, reaches zero and - by construction of the model - stays zero, usually it happens at the very beginning of the process) or number of infected nodes increases rapidely in the first stage, and then oscillates over some fixed value (see Fig.\ref{fig:singlerun}). We call this value stationary state (presicely, as we have already noted in Sec. \ref{ms}, meta-stable state). In order to compute average stationary state we neglect all the runs where there exist such a time step, when the number of infected nodes equals zero. 

\begin{center}
\begin{figure}
  \includegraphics[height=5cm]{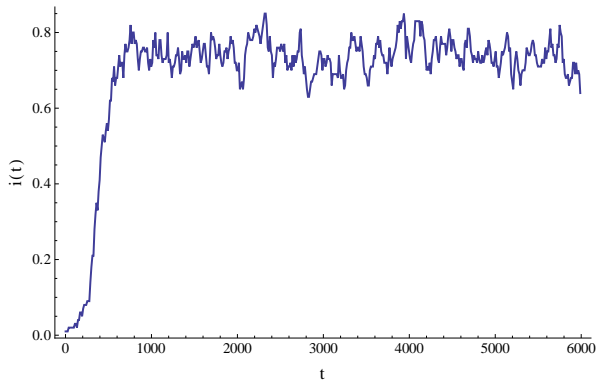}
  \caption{An example of a single run for $z$=1, $r$=0.005, random graph $G(n,p)$ of the size $n$=100 and $p$=0.5.}
\label{fig:singlerun}
\end{figure}
\end{center}
\begin{center}
\begin{figure}
  \includegraphics[height=5cm]{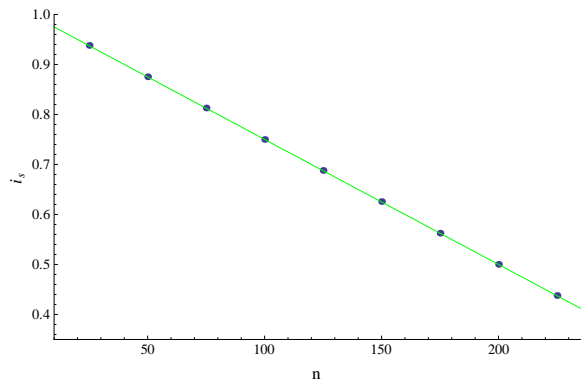}
  \caption{Plot of stationary state value of infected nodes denstiy $i_s$ for complete graphs versus network size $n$: simulation (blue dots) and theoretical result (\ref{eq:fg-stat}) (green line). We fix here $z$=1, $r$=0.005.}
\label{fig:fg-stat-n}
\end{figure}
\end{center}
\begin{center}
\begin{figure}
  \includegraphics[height=5cm]{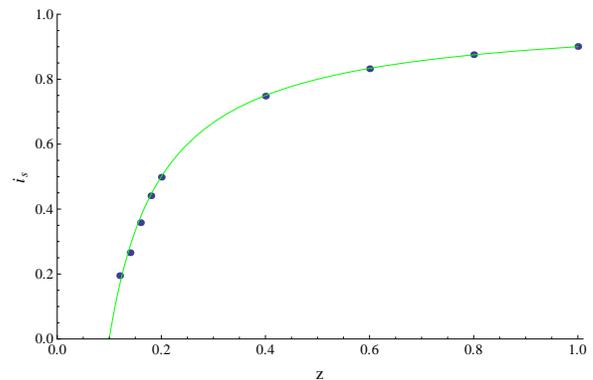}
  \caption{Plot of stationary state value of infected nodes denstiy $i_s$ for complete graphs versus contamination probability $z$: simulation (blue dots) and theoretical result (\ref{eq:fg-stat}) (green line). We fix here $n$=100, $r$=0.005.}
  \label{fig:fg-stat-z}
\end{figure}
\end{center}
First we examine complete graphs, as in the last section we provided the exact result for them (\ref{eq:fg-stat}). In Fig. \ref{fig:fg-stat-n} we show how stationary infected nodes density $i_s$ depends on network size $n$. Then, in Fig.\ref{fig:fg-stat-n}, we show dependence on contamination probability $z$. Both figures show perfect agreement between simulation and theory, Eq. (\ref{eq:fg-stat}).

As we have already seen the behaviour of complete graphs and how they relate to the theory described above, let us compare stationary state $i_s$ for four different kinds of graphs. In Fig. \ref{fig:ogolne} we show the results for complete graph, $G(n,p)$ random graph with $p=0.1$, small world graph with $k=10$ neighbors on the circle and rewiring probability $p=0.5$ (see \cite{SW}) and scale-free network with the exponent $\alpha=2.5$. Sizes of the graphs are fixed, $n=100$. Noticeably, the results for three out of four kinds of graphs are almost the same, while scale-free network goes an entirely different way. Below we will focus on complete, $G(n,p)$ and small world graphs only.

\begin{center}
\begin{figure}
  \includegraphics[height=5cm]{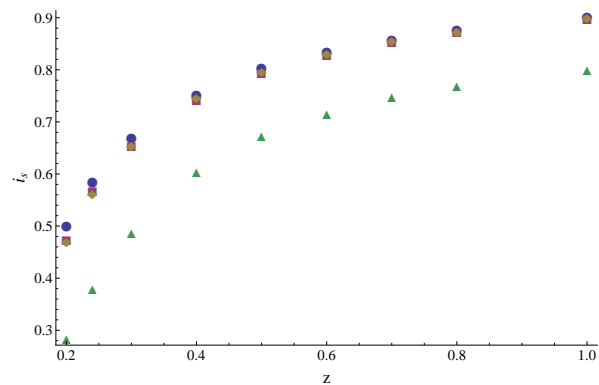}
  \caption{Plot of stationary state value of infected nodes denstiy $i_s$ for complete graph (blue circles), $G(n,p)$ random graph with $p=0.1$ (red squares), small world graph with $k=10$ neighbors on the circle and rewiring probability $p=0.5$ (yellow rotated squares) and scale-free network with the exponent $\alpha=2.5$ (green triangles) versus contamination probability $z$ and fixed network size $n=100$.}
\label{fig:ogolne}
\end{figure}
\end{center}

In Sec. \ref{largen} we concluded, that $G(n,p)$ graphs for large $n$ should resemble like complete graphs. It is instructive to see that in the limit of large $n$, epidemics, not only on $G(n,p)$, but also on small world graphs behaves the same as on complete graphs, see Fig. \ref{fig:srednieodn}.

\begin{center}
\begin{figure}
  \includegraphics[height=5cm]{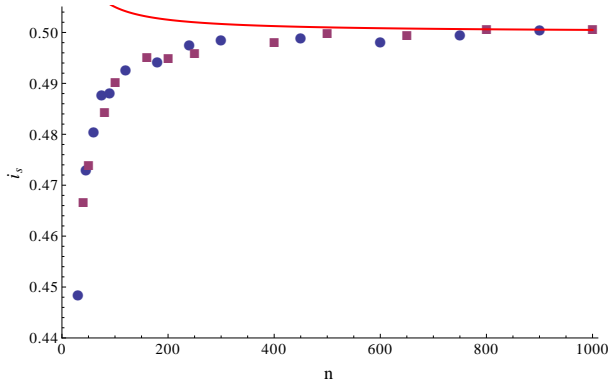}
  \caption{Plot of stationary state value of infected nodes denstiy $i_s$ for $G(n,p)$ random graph with $p=0.2$ (blue circles) and small world graph with rewiring probability $p=0.5$ (red squares) versus network size $n$. Number of neighbors on the circle $k=2n/10$ is chosen such that the edges density $\frac{k}{2n}$ stays fixed. Red line shows theoretical prediction for complete graphs (\ref{eq:fg-stat}). We fix here $z=1$ and $n\times r=1$.}
\label{fig:srednieodn}
\end{figure}
\end{center}

\subsection{Mixing time}
In this section we examine mixing times of the process, i.e. we check how long does it take to reach stationary state. Fig. \ref{fig:czasy_n} depicts how does average mixing time depend on $ln(n)$, where $n$ is network size, as usually. This is done for complete graph, $G(n,p)$ random graph with $p=0.2$ and small world graph with rewiring probability $p=0.5$. For the same graphs we check average mixing time dependence on inverse of distance from epidemics treshold $\eta$ (see Corrolary \ref{cor:1}). It is shown in Fig. \ref{fig:czasy_e}.
\begin{center}
\begin{figure}
  \includegraphics[height=5cm]{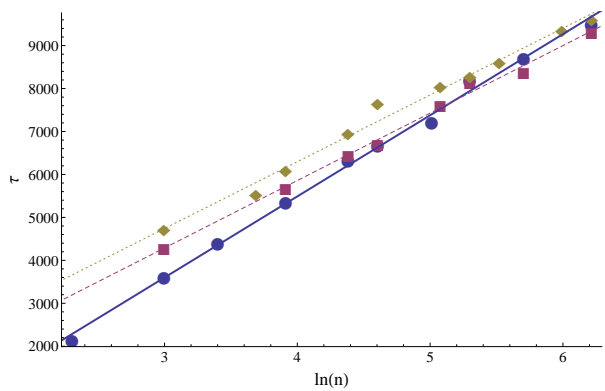}
  \caption{Average mixing time $\tau$ for complete graph (blue dots), $G(n,p)$ random graph with $p=0.2$ (red squares), small world graph with rewiring probability $p=0.5$ (yellow rotated squares) versus logarithm of network size $ln(n)$. Number of neighbors on the circle $k=2n/10$ is chosen such that the edges density $\frac{k}{2n}$ stays fixed. Simulational results are depicted by blue dots and red line shows theoretical prediction for complete graphs (\ref{eq:fg-stat}). We fix here $r=0.001$ and $n/z=1000$ in order to have stationary state not changed. Lines are plotted to guide the eye.}
\label{fig:czasy_n}
\end{figure}
\end{center}

\begin{center}
\begin{figure}
  \includegraphics[height=5cm]{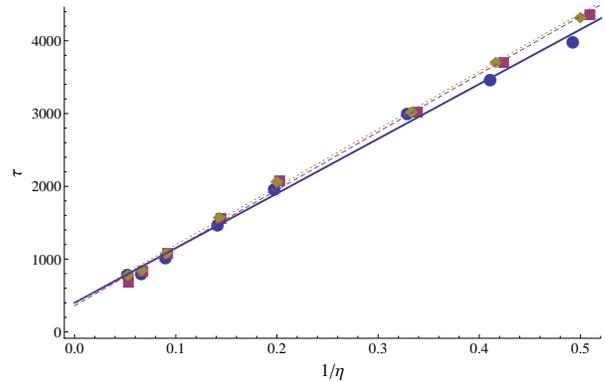}
  \caption{Average mixing time $\tau$ for complete graph (blue dots), $G(n,p)$ random graph with $p=0.2$ (red squares), small world graph with rewiring probability $p=0.5$ and $k=20$ neighbors on the circle (yellow rotated squares) versus inverse of distance from epidemics treshold $\eta$. Simulational results are depicted by blue dots and red line shows theoretical prediction for complete graphs (\ref{eq:fg-stat}). We fix here $r=0.001$ and $n=100$. Lines are plotted to guide the eye.}
\label{fig:czasy_e}
\end{figure}
\end{center}

These result show actually much more than Theorem and Corollary from Sec. \ref{ms}. We examine here average mixing time and show, that they are linear with $ln(n)$ and $1/\eta$, as theory in Sec. \ref{ms} suggest by bounds of probability of mixing time proportional to $ln(n)$ and $1/\eta$.

\section{Conclusions}
We have proposed model of epidemics spreading with at most one infection per times step. Starting from the general formula for the change of the number of infected nodes (\ref{eq:d}) we provided condition for epidemics treshold for any kind of graph. Simulational results for epidemics treshold follow the theoretical predictions perfectly. Further more, stationary density of infected nodes for complete and uncorrelated homogenous graphs has been derived and bounds for this density, using the notion of graph conductance, have been obtained. Complete graph simulations show agreement with the theory. Epidemy on $G(n,p)$ random graphs, according to no correlation in large $n$ limit \cite{gnp-nocor}, as well as on small world graphs, in the large $n$ limit, behave like epidemy on complete graphs.
\newline\indent
We have stated and proven theorem and corollary that bouds the probability of mixing time by values proportional to $ln(n)$ and $1/\eta$, where $n$ and $\eta$ are size of the network and distance form  epidemics treshold respectively. Simulations on complete, $G(n,p)$ and small world graphs show even more, mainly that the average mixing time is linear with $ln(n)$ and $1/\eta$.

\section*{Acknowledgements}
It is a pleasure to thank D.~Kwietniak and P.~De Los Rios
for fruitful discussions and helpfull advices.

\end{document}